\font\tencmmib=cmmib10 \skewchar\tencmmib '60
\def\lessim{\ \lower4pt\hbox{$
		\buildrel{\displaystyle <}\over\sim$}\ }
\def\gessim{\ \lower4pt\hbox{$\buildrel{\displaystyle >}
		\over\sim$}\ }
\newcommand{\e}{\mathbb{E}}
\newcommand{\p}{\mathbb{P}}
\newcommand{\pr}{\mathrm{Pr}}
\newtheorem{lemma}{\bf Lemma}
\newtheorem{theorem}{\bf Theorem}
\newtheorem{remark}{\bf Remark}
\newtheorem{proposition}{\bf Proposition}
\newmdtheoremenv{theo}{Theorem}
\newenvironment{Proof of lemma}{\noindent{\bf Proof of Lemma}}{\hfill$\Box$\newline}
\newenvironment{Proof of theorem}{\noindent{\it Proof of Theorem}}{\hfill\scriptsize{$\Box$}\newline}
\newenvironment{Proof of theorems}{\noindent{\bf Proof of Theorems}}{\hfill$\Box$\newline}
\newenvironment{Proof of proposition}{\noindent{\bf Proof of Proposition}}{\hfill$\Box$\newline}
\newenvironment{Proof of propositions}{\noindent{\bf Proof of Propositions}}{\hfill$\Box$\newline}
\newenvironment{Proof of exercise}{\noindent{\it Proof of Exercise:}}{\hfill$\Box$}
\begin{document}
	
			\title{On the Almeida-Thouless transition line \\
				in the Sherrington-Kirkpatrick model\\
				 with centered Gaussian external field}
	
	\author{Wei-Kuo Chen\thanks{University of Minnesota. Email: wkchen@umn.edu.  Partly supported by NSF grant DMS-17-52184}}
	
	\maketitle
	
	\begin{abstract}	
		We study the phase transition of the free energy in the Sherrington-Kirkpatrick mean-field spin glass model with centered Gaussian external field. We show that the corresponding Almeida-Thouless line is the correct transition curve that distinguishes between the replica symmetric and replica symmetry breaking solutions in the Parisi formula.
	\end{abstract}
	
	\section{Introduction and main results}

The famous Sherrington-Kirkpatrick (SK) mean-field spin glass model was introduced in \cite{SK72} aiming to explain some unusual magnetic behavior of certain alloys. By means of the replica method, it was also proposed in \cite{SK72} that the thermodynamic limit of the free energy in the SK model  can be solved  by the replica symmetric ansatz at very high temperature.  A complete picture was later settled in the seminal works \cite{Parisi79,Parisi80} of Parisi, in which he adapted an ultrametric ansatz and deduced a variational formula for the limiting free energy at all temperature, known as the Parisi formula. This formula was rigorously established by Talagrand \cite{Tal06} utilizing the replica symmetry breaking bound discovered by Guerra \cite{Guerra03}. See \cite{MPV87} for physicists' studies of the SK model as well as \cite{Pan13,Tal111,Tal112} for the recent mathematical progress.

For any $N\geq 1,$ the Hamiltonian of the SK model is defined as 
\begin{align}\label{SK}
	-H_N(\sigma)=\frac{\beta}{\sqrt{N}}\sum_{1\leq i<j\leq N}g_{ij}\sigma_i\sigma_j+h\sum_{i=1}^N\sigma_i,\,\,\forall\sigma\in \{-1,1\}^N,
\end{align}
where $g_{ij}$'s are i.i.d. standard Gaussian. The parameters $\beta>0$ and $h\in \mathbb{R}$ are the (inverse) temperature and external field, respectively. Define the free energy as $$F_N(\beta,h)=\frac{1}{N}\log\sum_{\sigma\in \{-1,1\}^N}e^{H_N(\sigma)}.$$
The famous Parisi formula \cite{Pan13,Tal112} asserts that almost surely,
\begin{align}\label{eq2}
	\lim_{N\to\infty}F_N(\beta,h)=\min_{\alpha\in \pr([0,1])}{P}(\alpha).
\end{align}
Here, $\pr([0,1])$ is the collection of all probability distribution functions on $[0,1]$ equipped with the $L^1(dx)$ distance and $P$ is a functional on  $\pr([0,1])$ defined as
\begin{align*}
	{P}(\alpha)&=\log 2+\e\Phi_{\alpha}(0,h)-\frac{\beta^2}{2}\int_0^1s\alpha(s)ds,
\end{align*}
where  $\Phi_\alpha$ is the weak solution \cite{JT16} to
\begin{align}\label{eq1}
	\partial_s\Phi_\alpha(s,x)&=-\frac{\beta^2}{2}\bigl(\partial_{xx}\Phi_\alpha(s,x)+\alpha(s)(\partial_x\Phi_\alpha(s,x))^2\bigr),\,\,(s,x)\in [0,1]\times \mathbb{R}
\end{align}
with boundary condition $\Phi(1,x)=\log\cosh x.$ It is known \cite{AC15} that the Parisi formula has a unique minimizer denoted by $\alpha_P.$ We say that the Parisi formula is solved by the replica symmetric solution if $\alpha_P=1_{[q,1]}$ for some $q\in [0,1]$ and is solved by the replica symmetry breaking solution if otherwise.

For any $\beta,h>0,$ let $q=q(\beta,h)$ be the unique solution (see \cite{Guerra01} and \cite[Proposition A.14.1]{Tal111}) to 
\begin{align}
	\label{fxpt}
	q=\e\tanh^2(h+\beta z\sqrt{q}),
\end{align}
where $z$ is standard Gaussian. In \cite{AT78}, it was conjectured that for $\beta,h>0$, the SK model is solved by the replica symmetric solution if and only if $(\beta,h)$ satisfies
$$
\beta^2\e\frac{1}{\cosh^4(h+\beta z\sqrt{q})}\leq 1.
$$ 
In other words, for $\beta,h>0,$ the following equation, known as the Almeida-Thouless line,
\begin{align}\label{atline0}
	\beta^2\e\frac{1}{\cosh^4(h+\beta z\sqrt{q})}=1,
\end{align}
characterizes the transition between the replica symmetric and replica symmetry breaking solutions.
Toninelli \cite{Toninelli02} proved that above the AT line, i.e, $\beta^2\e {\cosh^{-4}(h+\beta z\sqrt{q})}>1$, the solution to the Parisi formula is replica symmetry breaking. Later Talagrand \cite{Tal112} and Jagannath-Tobasco \cite{JT17} showed that inside the AT line, $\beta^2\e {\cosh^{-4}(h+\beta z\sqrt{q})}\leq1$, there exist fairly large regimes in which the Parisi formula is solved by replica symmetric solution. As parts of their regimes are not up to the AT line,  verifying the exactness of the AT line remains open.

In this short note, we investigate the SK model with centered Gaussian external field and show that the corresponding AT line is indeed the transition line distinguishing between the replica symmetric and replica symmetry breaking solutions in the Parisi formula. For $\beta,h>0$, the Hamiltonian of the SK model with centered Gaussian external field is defined as
\begin{align*}
	-	\mathcal{H}_N(\sigma)&=\frac{\beta}{\sqrt{N}}\sum_{1\leq i<j\leq N}g_{ij}\sigma_i\sigma_j+h\sum_{i=1}^N\xi_i\sigma_i,\,\,\forall \sigma\in \{-1,1\}^N,
\end{align*}
where $(g_{ij})_{1\leq i<j\leq N}$ and $(\xi_i)_{1\leq i\leq N}$ are i.i.d. standard normal and are independent of each other. The free energy associated to this Hamiltonian is given by
\begin{align*}
	\mathcal{F}_N(\beta,h)=\frac{1}{N}\e\log \sum_{\sigma\in \{-1,1\}^N}\exp \bigl(-\mathcal{H}_N(\sigma)\bigr).
\end{align*}
In a similar manner, the limiting free energy can be expressed by the Parisi formula as in \eqref{eq1} with a replacement of $h$ by $h\xi$ for $\xi$ a standard Gaussian random variable, more precisely, we have that almost surely,
\begin{align*}
	\lim_{N\to\infty}\mathcal{F}_N(\beta,h)=\min_{\alpha\in \pr([0,1])}\mathcal{P}(\alpha),
\end{align*}
where 
\begin{align*}
	\mathcal{P}(\alpha)&=\log 2+\e\Phi_{\alpha}(0,h\xi)-\frac{\beta^2}{2}\int_0^1s\alpha(s)ds,
\end{align*}
and $\Phi_\alpha$ is defined as \eqref{eq1}. Note that the proof of this formula is identically the same as those in \cite{Pan13,Tal112} with no essential modifications.  Moreover, as  in \cite{AC15}, the Parisi formula here also has a unique minimizer denoted by $\alpha_P.$ We again say that the Parisi formula is solved by the replica symmetric solution if $\alpha_P=1_{[q,1]}$ for some $q\in [0,1]$ and is solved by the replica symmetry breaking solution if otherwise.

\begin{figure}\label{figure1}
	\includegraphics[width=5cm]{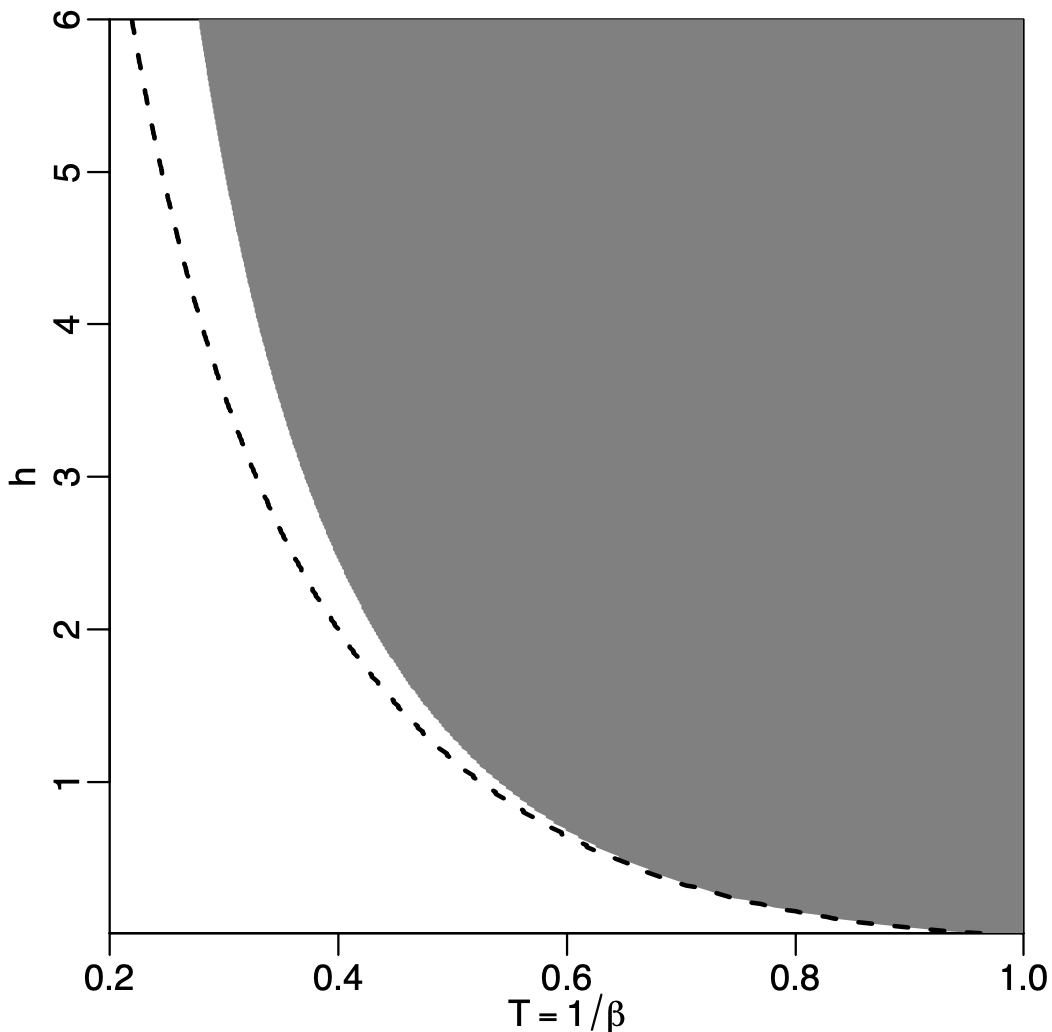}
	\centering
	\caption{This figure describes the phase transition in the SK model with centered Gaussian external field. The gray area is the regime of replica symmetric solutions, while the white area is the regime of replica symmetry breaking solutions. The boundary between these two regimes is the AT line \eqref{atline} corresponding to our model. The black-dash line is the original SK model with deterministic external field \eqref{atline0}.}
\end{figure}

The AT line corresponds to our model is formulated analogously. Let $z,\xi$ be i.i.d. standard Gaussian random variables. For $\beta,h>0,$ let $q=q(\beta,h)$ be the unique fixed point of 
\begin{align}\label{fixedpt}
	q&=\e \tanh^2(h\xi+\beta z\sqrt{q}),
\end{align} 
where the existence and uniqueness of $q$ are guaranteed thanks to the Latala-Guerra lemma.\footnote{
	In \cite{Guerra01} and \cite[Proposition A.14.1]{Tal111}, it is known that for any $r\in \mathbb{R},$ $x^{-1} {\e \tanh^2(r+\beta z\sqrt{x})}$ is a strictly decreasing function in $x>0$, which implies that $f(x):=x^{-1} {\e \tanh^2(h\xi+\beta z\sqrt{x})}$ is also strictly decreasing on $(0,\infty)$. Since $f(\infty)=0$ and $f(0)=\infty$, there exists a unique $x_0$ such that $f(x_0)=1.$ This ensures that \eqref{fixedpt} has a unique solution.
} The AT line associated to the SK model with centered Gaussian external field is the collection of all $(\beta,h)\in (0,\infty)\times (0,\infty)$ satisfying 
\begin{align}\label{atline}
	\beta^2\e \frac{1}{\cosh^4(h\xi+\beta z\sqrt{q})}=1.
\end{align}
The following is our main result.

\begin{theorem}\label{thm1}
	Consider the SK model with centered Gaussian external field. For any $\beta>0$ and $h>0,$ the Parisi formula exhibits the replica symmetric solution if and only if $(\beta,h)$ lies inside the AT line, i.e., \begin{align}\label{inat}
		\beta^2\e \frac{1}{\cosh^4(h\xi+\beta z\sqrt{q})}\leq 1.
	\end{align}
\end{theorem}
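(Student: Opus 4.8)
The plan is to decide when the replica symmetric profile $\alpha=1_{[q,1]}$ is the minimizer of $\mathcal P$ by means of the first–order optimality condition, exploiting the strict convexity of $\mathcal P$ established in \cite{AC15}. For $\alpha\in\pr([0,1])$ let $(X_s)_{s\in[0,1]}$ solve the associated Parisi diffusion $dX_s=\beta^2\alpha(s)\partial_x\Phi_\alpha(s,X_s)\,ds+\beta\,dW_s$ with $X_0=h\xi$, and set $\gamma_\alpha(s)=\e(\partial_x\Phi_\alpha(s,X_s))^2$. The standard first–variation computation gives, for every $\nu\in\pr([0,1])$, a directional derivative of $\mathcal P$ at $\alpha$ toward $\nu$ equal to $\frac{\beta^2}{2}\int_0^1(\nu(s)-\alpha(s))(\gamma_\alpha(s)-s)\,ds$, and by convexity this is nonnegative for all $\nu$ precisely when $\alpha$ is the global minimizer. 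Writing $\nu(s)=\mu([0,s])$ and applying Fubini turns the condition into $\int_{[0,1]}G(t)\,d\mu(t)\ge G(q)$ for all probability measures $\mu$, where $G(t):=\int_t^1(\gamma_\alpha(s)-s)\,ds$ with $\alpha=1_{[q,1]}$. Hence $1_{[q,1]}$ is the minimizer if and only if $q$ is a global minimizer of $G$ on $[0,1]$.

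Next I would pin down $\gamma:=\gamma_{1_{[q,1]}}$ and its derivative. Since $\partial_x\Phi(s,X_s)$ is a martingale along the Parisi diffusion, $\gamma'(s)=\beta^2\e(\partial_{xx}\Phi(s,X_s))^2$ on all of $[0,1]$. Solving \eqref{eq1} explicitly for $\alpha=1_{[q,1]}$ gives $\partial_x\Phi(q,x)=\tanh x$ and $\partial_{xx}\Phi(q,x)=\cosh^{-2}x$; as $X_q=h\xi+\beta\sqrt q z$, this yields $\gamma(q)=\e\tanh^2(h\xi+\beta\sqrt q z)=q$ (the fixed point equation \eqref{fixedpt}, so $G'(q)=0$) and $\gamma'(q)=\beta^2\e\,\cosh^{-4}(h\xi+\beta\sqrt q z)$, whence $G''(q)=1-\gamma'(q)\ge0$ is exactly the AT condition \eqref{inat}. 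On $[0,q]$ one has $\alpha\equiv0$, so $\partial_{xx}\Phi(s,X_s)$ is itself a heat martingale, $\e(\partial_{xx}\Phi(s,X_s))^2$ is nondecreasing, and $\gamma$ is convex there; together with $\gamma(q)=q$ and $\gamma'(q)\le1$ this forces $\gamma(s)\ge s$, i.e.\ $G$ nonincreasing on $[0,q]$, settling $G(t)\ge G(q)$ for $t\le q$. The same computation gives the easy ``only if'' direction: if \eqref{inat} fails then $\gamma'(q)>1$, so $G$ strictly decreases through $q$ from the left and $q$ is not a global minimizer; since the first–order condition and the Latala–Guerra uniqueness force any replica symmetric minimizer to carry this same $q$, the minimizer must then be replica symmetry breaking.

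The hard ``if'' direction is the range $t\in[q,1]$, where $\alpha\equiv1$. There Cole–Hopf linearizes \eqref{eq1} and yields the explicit solution $\Phi(s,x)=\log\cosh x+\frac{\beta^2}{2}(1-s)$, so the Parisi diffusion reduces to the autonomous equation $dX_s=\beta^2\tanh(X_s)\,ds+\beta\,dW_s$, the bounded martingale $M_s:=\tanh(X_s)$ satisfies $dM_s=\beta(1-M_s^2)\,dW_s$, and $\gamma(s)=\e M_s^2$ with $\gamma'(s)=\beta^2\e(1-M_s^2)^2=\beta^2\e\,\cosh^{-4}(X_s)$. The inequality $G(t)\ge G(q)$ becomes $\int_q^t(s-\gamma(s))\,ds\ge0$, equivalently that the running average of $\beta^2\e\,\cosh^{-4}(X_s)$ over $[q,s]$ stays at most $1$. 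This is exactly where the centered Gaussian field is used: $X_q=h\xi+\beta\sqrt q z$ is a centered Gaussian, so by oddness of the drift the law of $X_s$ is symmetric for every $s$, a feature that fails for a deterministic field.

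I expect the main obstacle to be precisely this integrated bound on $[q,1]$. A naive pointwise monotonicity fails, since $\frac{d}{ds}\e\,\cosh^{-4}(X_s)=2\beta^2\e[\cosh^{-4}(X_s)(3\tanh^2(X_s)-1)]$ has indefinite sign, so $\gamma'$ need not remain below $1$ and the bound $\gamma'(q)\le1$ cannot simply be propagated. My plan is to control the symmetric bounded martingale $M_s$ by comparing the genuine nonlinear evolution with the Gaussian family $h\xi+\beta\sqrt s z$, whose sign of $\gamma(\cdot)-\mathrm{id}$ is governed by the Latala–Guerra strict decrease of $x^{-1}\e\tanh^2(\cdot+\beta z\sqrt x)$ recorded in the footnote; the centered field is what orients this comparison in the right direction. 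Should a pointwise bound $\gamma(s)\le s$ be unavailable, I would instead prove the weaker integrated inequality directly, through a convexity property of $G$ on $[q,1]$ or a moment/coupling estimate for $M_s$, and then combine the $[0,q]$ and $[q,1]$ analyses to conclude that $q$ globally minimizes $G$, which by the first step is equivalent to the replica symmetric solution.
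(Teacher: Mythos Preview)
Your reduction to the first--order condition, the $[0,q]$ analysis via convexity of $\gamma$, and the ``only if'' direction are correct and match the paper. The genuine gap is the range $[q,1]$, where you only sketch plans. The paper does obtain the \emph{pointwise} bound $\gamma'(s)\le 1$ on all of $[q,1]$ (so no integrated--average argument is needed), but the mechanism is not a direct study of the nonlinear diffusion $(X_s)$, and this is precisely the idea you are missing.

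The trick is to undo the drift. By Girsanov (equivalently, the Cole--Hopf structure you already noted), the law of $X_s$ given $X_q$ is that of $Y:=X_q+\beta z'\sqrt{s-q}$ weighted by $\cosh Y/\e'\cosh Y$, so that
\[
\gamma'(s)=\beta^2\,\e\cosh^{-4}(X_s)=\beta^2\,\e\Bigl[\frac{\e'\cosh^{-3}Y}{\e'\cosh Y}\Bigr],
\]
with $\e'$ averaging over $z'$. Since $\cosh^{-4}|y|$ is decreasing and $\cosh|y|$ increasing, the FKG inequality applied to $|Y|$ gives $\e'\cosh^{-3}Y\le (\e'\cosh^{-4}Y)(\e'\cosh Y)$, hence $\gamma'(s)\le \beta^2\,\e\cosh^{-4}(h\xi+\beta z\sqrt{s})$, which is exactly the comparison with the Gaussian family $h\xi+\beta z\sqrt{s}$ you were aiming for. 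Now the centered external field enters decisively: $h\xi+\beta z\sqrt{s}\sim N(0,h^2+\beta^2 s)$, and since $\cosh^{-4}$ is even and decreasing on $[0,\infty)$, the map $s\mapsto \e\cosh^{-4}(z\sqrt{h^2+\beta^2 s})$ is nonincreasing; thus $\gamma'(s)\le \beta^2\,\e\cosh^{-4}(h\xi+\beta z\sqrt{q})\le 1$ by \eqref{inat}, and with $\gamma(q)=q$ this yields $\gamma(s)\le s$ on $[q,1]$. Your It\^o computation showing that $\frac{d}{ds}\e\cosh^{-4}(X_s)$ has indefinite sign is correct but a red herring: monotonicity of $\gamma'$ is never used, only the uniform bound $\gamma'\le 1$, and the FKG step is what bridges the nonlinear diffusion back to the purely Gaussian picture where the centered field can be exploited.
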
 

The rest of the paper is organized as follows. In Section \ref{sec2}, we gather some results regarding the directional derivative of the functional $\mathcal{P}$ and some consequences following the first order optimality of the Parisi variational formula. The proof of Theorem \ref{thm1} is presented in Section \ref{sec3}.

\smallskip

{\noindent \bf Acknowledgements.} 	The author thanks the anonymous referees for some helpful comments regarding the presentation of this paper. Special thanks are due to Si Tang for running the simulations in Figures \ref{figure1} and \ref{figure3}.

\section{Preliminary results}\label{sec2}

Let $\xi$ be standard Gaussian. Let $\alpha\in \mbox{Pr}[0,1]$. Conditionally on $\xi$, let $(X_{\alpha}^\xi(s))_{0\leq s\leq 1}$ be the solution to the following stochastic differential equation with boundary condition $X_{\alpha}^\xi(0)=h\xi$,
\begin{align*}
	dX_{\alpha}^\xi(s)&=\beta^2\alpha(s)\partial_x\Phi_{\alpha}(s,X_{\alpha}^\xi(s))ds+\beta dW(s),
\end{align*}
where $(W(s))_{s\in [0,1]}$ is a standard Brownian motion independent of $\xi.$ Set
\begin{align*}
	\phi_\alpha(s)=\e \bigl(\partial_x\Phi_{\alpha}(s,X_{\alpha}^\xi(s))\bigr)^2,\,\,0\leq s\leq 1.
\end{align*}
The following proposition establishes the directional derivative of $\mathcal{P}$ in terms of $\phi_\alpha.$

\begin{proposition}\label{prop1}
	For $\alpha_0,\alpha_1\in \mbox{Pr}[0,1]$ and $\theta\in [0,1],$ set $\alpha_\theta=(1-\theta)\alpha_0+\theta\alpha_1.$ The right-derivative of $\theta\mapsto\mathcal{P}(\alpha_\theta)$ at zero is given by
	\begin{align*}
		\frac{d\mathcal{P}(\alpha_\theta)}{d\theta^+}\Bigr|_{\theta=0}=\frac{\beta^2}{2}\int_0^1\bigl(\alpha_1(s)-\alpha_0(s)\bigr)\bigl(s-\phi_{\alpha_0}(s)\bigr)ds.
	\end{align*}
	Furthermore, $\alpha_0$ is the minimizer of $\min_{\alpha\in \mbox{\footnotesize Pr}[0,1]}\mathcal{P}(\alpha)$ if and only if $\frac{d\mathcal{P}(\alpha_\theta)}{d\theta^+}\Bigr|_{\theta=0}\geq 0$ for all $\alpha_1\in \mbox{Pr}[0,1].$
\end{proposition}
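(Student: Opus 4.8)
The plan is to combine a Feynman--Kac linearization of the Parisi PDE \eqref{eq1} with the convexity of $\mathcal{P}$. For the derivative formula, fix $\alpha_0,\alpha_1\in\mathrm{Pr}[0,1]$ and abbreviate $\Phi_\theta:=\Phi_{\alpha_\theta}$. In $\mathcal{P}(\alpha_\theta)$ only two terms depend on $\theta$: the term $\mathbb{E}\Phi_\theta(0,h\xi)$ and the linear term $-\frac{\beta^2}{2}\int_0^1 s\alpha_\theta(s)\,ds$. The latter is affine in $\theta$, with derivative $-\frac{\beta^2}{2}\int_0^1 s\bigl(\alpha_1(s)-\alpha_0(s)\bigr)\,ds$, so the whole computation reduces to differentiating $\theta\mapsto\mathbb{E}\Phi_\theta(0,h\xi)$ at $\theta=0^+$.

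To carry this out I would introduce $u(s,x):=\partial_\theta\Phi_\theta(s,x)\big|_{\theta=0^+}$ and differentiate \eqref{eq1} in $\theta$. Since $\partial_\theta\alpha_\theta=\alpha_1-\alpha_0$, the function $u$ solves the \emph{linear} backward equation $\partial_s u=-\frac{\beta^2}{2}\bigl(\partial_{xx}u+(\alpha_1-\alpha_0)(s)(\partial_x\Phi_{\alpha_0})^2+2\alpha_0(s)\,\partial_x\Phi_{\alpha_0}\,\partial_x u\bigr)$ with $u(1,x)=0$. The key point is that the transport coefficient $\beta^2\alpha_0\,\partial_x\Phi_{\alpha_0}$ is exactly the drift of the diffusion $X_{\alpha_0}^\xi$. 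Applying It\^o's formula to $s\mapsto u\bigl(s,X_{\alpha_0}^\xi(s)\bigr)$, the second-order and transport contributions cancel against the generator of $X_{\alpha_0}^\xi$, leaving $d\,u\bigl(s,X_{\alpha_0}^\xi(s)\bigr)=-\frac{\beta^2}{2}(\alpha_1-\alpha_0)(s)\bigl(\partial_x\Phi_{\alpha_0}(s,X_{\alpha_0}^\xi(s))\bigr)^2\,ds$ plus a stochastic-integral term. Integrating over $[0,1]$, using $u(1,\cdot)=0$, and taking expectations (so that the martingale drops) gives $\mathbb{E}\,u(0,h\xi)=\frac{\beta^2}{2}\int_0^1(\alpha_1-\alpha_0)(s)\,\phi_{\alpha_0}(s)\,ds$. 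Adding the derivative of the linear term then assembles the integrand into $\bigl(\alpha_1(s)-\alpha_0(s)\bigr)$ tested against $s-\phi_{\alpha_0}(s)$, which is the claimed identity.

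For the minimizer characterization I would invoke the convexity of $\alpha\mapsto\mathcal{P}(\alpha)$ on the convex set $\mathrm{Pr}[0,1]$, which is precisely the structural property underlying the uniqueness of $\alpha_P$ in \cite{AC15}; in particular $\theta\mapsto\mathcal{P}(\alpha_\theta)$ is convex on $[0,1]$, so its right-derivative at $0$ exists. The ``only if'' direction is immediate: if the right-derivative were negative for some $\alpha_1$, then $\mathcal{P}(\alpha_\theta)<\mathcal{P}(\alpha_0)$ for small $\theta>0$, contradicting optimality of $\alpha_0$. Conversely, convexity gives $\mathcal{P}(\alpha_1)\geq\mathcal{P}(\alpha_0)+\frac{d\mathcal{P}(\alpha_\theta)}{d\theta^+}\big|_{\theta=0}$; hence if the right-derivative is nonnegative for every $\alpha_1\in\mathrm{Pr}[0,1]$, then $\mathcal{P}(\alpha_1)\geq\mathcal{P}(\alpha_0)$ for all $\alpha_1$, and $\alpha_0$ is the minimizer.

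I expect the main difficulty to be analytic rather than conceptual: one must justify the differentiation $u=\partial_\theta\Phi_\theta$ and the use of It\^o's formula at the level of the weak solution of \eqref{eq1}. This rests on the standard a priori estimates for the Parisi PDE---notably the bound $|\partial_x\Phi_\alpha|\leq 1$ together with Lipschitz and regularity estimates for $\partial_x\Phi_\alpha$ and for its dependence on $\alpha$---which guarantee that $X_{\alpha_0}^\xi$ is a genuine diffusion, that the stochastic integral above is a true mean-zero martingale, and that the $\theta$-derivative may be passed inside $\mathbb{E}\Phi_\theta(0,h\xi)$ by dominated convergence. I would quote these regularity facts from \cite{JT16,AC15} rather than reprove them, verifying only the continuity in $\theta$ needed to identify the one-sided derivative.
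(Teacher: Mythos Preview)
The paper does not give its own proof of this proposition: it simply observes that Theorem~2 of \cite{Chen17} (for the SK model with deterministic external field) carries over verbatim once $h$ is replaced by $h\xi$, and declines to reproduce the argument. Your sketch---differentiate the Parisi PDE in $\theta$, represent the resulting linear backward equation via It\^o's formula along the drifted diffusion $X_{\alpha_0}^\xi$, and then invoke convexity of $\mathcal{P}$ from \cite{AC15} for the optimality characterization---is precisely the standard Feynman--Kac linearization underlying that reference. So you are reconstructing the proof the paper is citing rather than proposing a genuinely different route, and the analytic caveats you flag (regularity of $\Phi_\alpha$, justification of $\partial_\theta\Phi_\theta$, martingale property of the stochastic integral) are exactly the points handled by the estimates in \cite{JT16,AC15} that you plan to quote.

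One small bookkeeping slip: your own computation gives
\[
\mathbb{E}\,u(0,h\xi)=\frac{\beta^2}{2}\int_0^1(\alpha_1-\alpha_0)(s)\,\phi_{\alpha_0}(s)\,ds,
\]
and adding the derivative $-\tfrac{\beta^2}{2}\int_0^1 s(\alpha_1-\alpha_0)(s)\,ds$ of the linear term yields the integrand $(\alpha_1-\alpha_0)(s)\bigl(\phi_{\alpha_0}(s)-s\bigr)$, not $\bigl(s-\phi_{\alpha_0}(s)\bigr)$ as you wrote in your final sentence. In fact the two applications of Proposition~\ref{prop1} in the proof of Theorem~\ref{thm1} are consistent with the sign $(\phi_{\alpha_0}-s)$, so the displayed formula in the proposition statement appears to carry a sign typo; your derivation is correct and recovers the version actually used downstream.
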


\begin{proposition}\label{prop2}
	If $\alpha_0$ is the minimizer of $\mathcal{P}$, then every point in the support of $\alpha_0$ must satisfy $\phi_{\alpha_0}(s)=s.$
\end{proposition}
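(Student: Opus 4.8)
The plan is to feed the first-order optimality condition of Proposition \ref{prop1} a family of Dirac-mass perturbations, extract a complementary-slackness identity, and then upgrade the resulting integrated equality to the pointwise statement by differentiation. Throughout I write $\psi(s)=s-\phi_{\alpha_0}(s)$ and let $\mu_0$ be the probability measure on $[0,1]$ whose distribution function is $\alpha_0$, so that $\alpha_0(s)=\mu_0([0,s])$.

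First I would record the optimality inequality. Since $\alpha_0$ is the minimizer, Proposition \ref{prop1} gives $\int_0^1(\alpha_1(s)-\alpha_0(s))\psi(s)\,ds\geq 0$ for every $\alpha_1\in\mathrm{Pr}[0,1]$. Introduce the antiderivative $\Psi(t)=\int_t^1\psi(s)\,ds$; it is Lipschitz on $[0,1]$ because $\psi$ is bounded (recall the standard bound $|\partial_x\Phi_{\alpha_0}|\leq 1$, whence $\phi_{\alpha_0}\in[0,1]$ and $\psi\in[-1,1]$). Testing the inequality with the Dirac distribution function $\alpha_1=1_{[t,1]}$, whose contribution is exactly $\int_t^1\psi(s)\,ds=\Psi(t)$, yields $\Psi(t)\geq c$ for all $t\in[0,1]$, where $c:=\int_0^1\alpha_0(s)\psi(s)\,ds$. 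A Fubini computation, using $\alpha_0(s)=\mu_0([0,s])$, rewrites this constant as $c=\int_{[0,1]}\Psi\,d\mu_0$.

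The next step is complementary slackness. Since $\Psi\geq c$ pointwise while $\int_{[0,1]}\Psi\,d\mu_0=c$ and $\mu_0$ is a probability measure, the nonnegative function $\Psi-c$ integrates to zero against $\mu_0$, so $\Psi=c$ $\mu_0$-almost everywhere; by continuity of $\Psi$ this forces $\Psi\equiv c$ on the support of $\mu_0$. In other words, every support point is a global minimizer of $\Psi$ on $[0,1]$.

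Finally I would pass from ``$\Psi$ is constant on the support'' to ``$\psi$ vanishes on the support.'' Here I would invoke the regularity of the minimizer: $s\mapsto\phi_{\alpha_0}(s)$ is continuous, which follows from the smoothness of the Parisi solution $\Phi_{\alpha_0}$ and the continuity of the diffusion $X_{\alpha_0}^\xi$, so $\Psi$ is continuously differentiable with $\Psi'=-\psi$. At any support point $s_0\in(0,1)$, $\Psi$ attains an interior global minimum, which forces $\Psi'(s_0)=0$, i.e. $\phi_{\alpha_0}(s_0)=s_0$; the continuity of $\psi$ then extends the identity to boundary support points, realized as limits of interior support points. The hard part is precisely this last passage — guaranteeing enough regularity of $\phi_{\alpha_0}$ to differentiate and correctly handling the endpoints $0$ and $1$ — whereas the variational portion (the Dirac test together with complementary slackness) is routine.
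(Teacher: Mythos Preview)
The paper does not give its own proof of Proposition~\ref{prop2}; it simply defers to \cite{Chen17}. Your scheme --- test the first-order inequality of Proposition~\ref{prop1} against Diracs $\alpha_1=1_{[t,1]}$, read off $\Psi\equiv c$ on $\mathrm{supp}(\mu_0)$ by complementary slackness, then differentiate at interior extrema --- is precisely that standard argument, and the variational steps together with the treatment of interior support points are correct.

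The genuine gap is your last sentence. Boundary support points are \emph{not} in general ``limits of interior support points'': if $\mu_0$ carries an isolated atom at $0$ or at $1$ your continuity argument is vacuous. The clean fix uses no limits, only the one-sided first-derivative test at a boundary extremum. One caveat is needed first: Proposition~\ref{prop1} as printed carries the factor $(s-\phi_{\alpha_0}(s))$, but both applications in the proof of Theorem~\ref{thm1} (and the formula in \cite{Chen17}) use the opposite sign $(\phi_{\alpha_0}(s)-s)$; with that corrected sign your complementary-slackness step says $\mathrm{supp}(\mu_0)$ sits at global \emph{maxima} of $\Psi$, not minima. Then a boundary maximum at $s_0=1$ would force $\Psi'(1^-)\ge 0$, i.e.\ $\phi_{\alpha_0}(1)\ge 1$, contradicting $\phi_{\alpha_0}(1)=\e\tanh^2 X_{\alpha_0}^\xi(1)<1$; hence $1\notin\mathrm{supp}(\mu_0)$ outright. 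A boundary maximum at $s_0=0$ would force $\Psi'(0^+)\le 0$, i.e.\ $\phi_{\alpha_0}(0)\le 0$, hence $\phi_{\alpha_0}(0)=0$, which is exactly the claimed identity (and for $h>0$ one in fact has $\phi_{\alpha_0}(0)>0$, so $0\notin\mathrm{supp}(\mu_0)$). This disposes of the endpoints without any approximation by interior points.
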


In the case of the SK model with deterministic external field \eqref{SK}, the two propositions above were established in Theorem 2 and Proposition 1 in \cite{Chen17}, respectively, where the statements are essentially the same as Propositions \ref{prop1} and \ref{prop2} with a replacement of $h\xi$ by $h.$ As the proofs in \cite{Chen17} can be directly applied to Propositions \ref{prop1} and \ref{prop2} with no major changes, we do not reproduce the details here. The following lemma provides a simpler expression for $\phi_{\alpha}$.

\begin{lemma}\label{lem1}
	Let $\alpha\in \mbox{Pr}[0,1].$ For any $s\in [0,1]$, we have that 
	\begin{align*}
		\phi_\alpha(s)&=\e \partial_{x}\Phi_{\alpha}(s,M(s))^2\exp\Bigl(\int_0^s (\Phi_{\alpha}(s,M(s))-\Phi_{\alpha}(t,M(t)))\alpha(dt)\Bigr),
	\end{align*}
	where $M(s):=h\xi+\beta W(s).$
\end{lemma}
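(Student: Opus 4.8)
The plan is to pass from the drift-diffusion process $X_\alpha^\xi$ to the driftless process $M$ via a Girsanov change of measure, and then to simplify the resulting density using the Parisi PDE \eqref{eq1}. First I would work conditionally on $\xi$, so that $M(0)=h\xi$ is a fixed starting point. Since $dM(s)=\beta\,dW(s)$ while $X_\alpha^\xi$ carries the extra drift $\beta^2\alpha(s)\partial_x\Phi_\alpha$, Girsanov's theorem produces a measure $\mathbb{Q}$ under which $M$ solves exactly the SDE defining $X_\alpha^\xi$; the associated Radon--Nikodym density is the stochastic exponential
\[
Z(s)=\exp\Bigl(\beta\int_0^s\alpha(t)\partial_x\Phi_\alpha(t,M(t))\,dW(t)-\frac{\beta^2}{2}\int_0^s\alpha(t)^2\bigl(\partial_x\Phi_\alpha(t,M(t))\bigr)^2\,dt\Bigr).
\]
Because $|\partial_x\Phi_\alpha|\le 1$ uniformly (a standard maximum-principle consequence of the boundary condition $\partial_x\Phi_\alpha(1,\cdot)=\tanh$), Novikov's condition holds trivially, so $Z$ is a genuine martingale; in particular $\e[Z(s)\mid\xi]=1$, which shows the law of $\xi$ is unchanged under $\mathbb{Q}$. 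Identifying $\mathbb{Q}$-expectations of functionals of $M$ with $\p$-expectations of the same functionals of $X_\alpha^\xi$ then gives $\phi_\alpha(s)=\e\bigl[(\partial_x\Phi_\alpha(s,M(s)))^2\,Z(s)\bigr]$.

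The second step is to rewrite $\log Z(s)$ in the claimed form. Applying It\^o's formula to $t\mapsto\Phi_\alpha(t,M(t))$ and substituting the Parisi PDE \eqref{eq1} for $\partial_t\Phi_\alpha+\tfrac{\beta^2}{2}\partial_{xx}\Phi_\alpha$, the second-order term cancels and one obtains the clean identity
\[
d\Phi_\alpha(t,M(t))=\beta\,\partial_x\Phi_\alpha(t,M(t))\,dW(t)-\frac{\beta^2}{2}\alpha(t)\bigl(\partial_x\Phi_\alpha(t,M(t))\bigr)^2\,dt.
\]
Multiplying by $\alpha(t)$ and integrating lets me replace the stochastic integral appearing in $\log Z(s)$; the It\^o correction it produces is exactly $\tfrac{\beta^2}{2}\int_0^s\alpha(t)^2(\partial_x\Phi_\alpha)^2\,dt$, which cancels the quadratic-variation term in $Z$, leaving the pathwise expression $\log Z(s)=\int_0^s\alpha(t)\,d\Phi_\alpha(t,M(t))$.

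Finally I would integrate by parts in the Stieltjes sense. Since $\alpha$ is a deterministic function of bounded variation and $t\mapsto\Phi_\alpha(t,M(t))$ is a continuous semimartingale, their covariation vanishes and
\[
\int_0^s\alpha(t)\,d\Phi_\alpha(t,M(t))=\alpha(s)\Phi_\alpha(s,M(s))-\int_0^s\Phi_\alpha(t,M(t))\,\alpha(dt)=\int_0^s\bigl(\Phi_\alpha(s,M(s))-\Phi_\alpha(t,M(t))\bigr)\alpha(dt),
\]
using $\alpha(0^-)=0$ and $\int_0^s\alpha(dt)=\alpha(s)$. Substituting this into the density representation of $\phi_\alpha(s)$ yields the lemma. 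The main obstacle I anticipate is the rigorous justification of the change of measure combined with the interchange of conditioning on $\xi$ and Girsanov's theorem: one must confirm that $Z$ is a true martingale (handled by the boundedness of $\partial_x\Phi_\alpha$) and that $\xi$ retains its Gaussian law under $\mathbb{Q}$. The endpoint and atom bookkeeping in the integration by parts is a minor technical point, provided the convention $\alpha(s)=\alpha([0,s])$ is used throughout.
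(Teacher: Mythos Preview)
Your proposal is correct and follows essentially the same approach as the paper: a Girsanov change of measure linking $X_\alpha^\xi$ and $M$, followed by It\^o's formula combined with the Parisi PDE to identify the density. The only cosmetic differences are that the paper applies Girsanov in the reverse direction (removing the drift from $X_\alpha^\xi$ rather than adding it to $M$) and recovers the exponent via Fubini rather than Stieltjes integration by parts.
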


\begin{proof}
	Conditionally on $\xi$, let ${\e}^\xi$ be the expectation associated to the probability measure 
	\begin{align*}
		d{\p}^\xi=	\exp\Bigl(-\int_0^1\beta\alpha(r)\partial_{x}\Phi(r,X_\alpha^\xi(r))dW(r)-\frac{1}{2}\int_0^1\beta^2 \alpha(r)^2\partial_x\Phi_{\alpha}(r,X_\alpha^\xi(r))^2dr\Bigr)d\p.
	\end{align*}
	From the Girsanov theorem, under ${\e}^\xi$, $$
	\Bigl(\int_0^s\beta \alpha(r)\partial_x\Phi_{\alpha}(r,X_\alpha^\xi(r))dr+W(s)\Bigr)_{0\leq s\leq 1}
	$$ is a standard Brownian motion for which we denote by $(W^\xi(s))_{0\leq s\leq 1}.$ Set $M^\xi(s)=h\xi+\beta W^\xi(s).$ Under $\e^\xi,$ write
	\begin{align*}
		&\int_0^1\beta\alpha(r)\partial_{x}\Phi(r,X_\alpha^\xi(r))dW(r)+\frac{1}{2}\int_0^1\beta^2 \alpha(r)^2\partial_x\Phi_{\alpha}(r,X_\alpha^\xi(r))^2dr\\
		&=\int_0^1\beta\alpha(r)\partial_{x}\Phi(r,X_\alpha^\xi(r))\Bigl(\beta \alpha(r)\partial_x\Phi_\alpha(r,X_\alpha^\xi(r)) dr+dW(r)\Bigr)\\
		&\qquad\qquad-\frac{1}{2}\int_0^1\beta^2 \alpha(r)^2\partial_x\Phi_{\alpha}(r,X_\alpha^\xi(r))^2dr\\
		&=\int_0^1\beta\alpha(r)\partial_{x}\Phi(r, M^\xi(r))dW^\xi(r)-\frac{1}{2}\int_0^1\beta^2 \alpha(r)^2\partial_x\Phi_{\alpha}(r,  M^\xi(r))^2dr.
	\end{align*}
	Consequently, 
	\begin{align}
		\nonumber \e\bigl[\bigl(\partial_x\Phi_\alpha(s,X_{\alpha}^\xi(s))\bigr)^2\big|\xi\bigr]&=\e^\xi \bigl(\partial_x\Phi_\alpha(s,M^\xi(s))\bigr)^2\exp\Bigl(\int_0^1\beta\alpha(r)\partial_{x}\Phi_\alpha(r, M^\xi(r))dW^\xi(r)\\
		\label{eq3}&\qquad\qquad\qquad\qquad-\frac{1}{2}\int_0^1\beta^2 \alpha(r)^2\partial_x\Phi_{\alpha}(r, M^\xi(r))^2dr\Bigr).
	\end{align}
	Next, note that from It\^o's formula and \eqref{eq1}, for $0\leq t<s\leq 1,$
	\begin{align*}
		&\Phi_\alpha(s, M^\xi(s))-\Phi_\alpha(t, M^\xi(t))\\
		&=-\frac{\beta^2}{2}\int_t^s\alpha(r)(\partial_x\Phi_\alpha(r, M^\xi(r)))^2dr+\beta \int_t^s\partial_x\Phi_\alpha(r, M^\xi(r))W^\xi(dr).
	\end{align*} 
	and then from the Fubini theorem,
	\begin{align*}
		&\int_0^s\bigl(\Phi_\alpha(s, M^\xi(s))-\Phi_\alpha(t, M^\xi(t))\bigr)\alpha(dt)\\
		&=-\frac{\beta^2}{2}\int_0^s\alpha(r)^2(\partial_x\Phi_\alpha(r,M^\xi(r)))^2dr+\beta \int_0^s\alpha(r)\partial_x\Phi_\alpha(r, M^\xi(r))W^\xi(dr).
	\end{align*}
	Plugging this into \eqref{eq3} yields that
	\begin{align*}
		\e \bigl[\bigl(\partial_x\Phi_\alpha(s,X_\alpha^\xi(s))\bigr)^2\big|\xi\bigr]&=\e^\xi \Bigl[\partial_{x}\Phi_{\alpha}(s,M^\xi(s))^2\\
		&\qquad\exp\Bigl(\int_0^s \bigl(\Phi_{\alpha}(s,M^\xi(s))-\Phi_{\alpha}(t,M^\xi(t))\bigr)\alpha(dt)\Bigr)\Bigr]\\
		&=\e \Bigl[\partial_{x}\Phi_{\alpha}(s,M(s))^2\\
		&\qquad\exp\Bigl(\int_0^s \bigl(\Phi_{\alpha}(s,M(t))-\Phi_{\alpha}(t,M(t))\bigr)\alpha(dt)\Bigr)\Big|\xi\Bigr],
	\end{align*}
	where we used that $(M(s))_{0\leq s\leq 1}\stackrel{d}{=}(M^\xi(s))_{0\leq s\leq 1}$ conditionally on $\xi.$
	Finally, taking expectation in $\xi$ completes our proof.
\end{proof}

While Lemma \ref{lem1} holds for any $\alpha$, the important case that we shall use in our main proof is when $\alpha=1_{[q,1]}$ for some $q\in [0,1].$ In this case, one can compute $\phi_\alpha$ more explicitly. Let $z,z'$ be i.i.d. standard Gaussian independent of $\xi.$ Denote by $\e'$ the expectation with respect to $z'$ only. 

\begin{lemma}\label{lem2}
	Let $\alpha=1_{[q,1]}$ for some $q\in [0,1]$. We have that
	\begin{align}\label{lem2:eq1}
		\phi_\alpha(s)&=
		\left\{
		\begin{array}{ll}
			\e  \Bigl(\frac{\e'\tanh^2 (h\xi+\beta z\sqrt{q}+\beta z'\sqrt{s-q})\cosh (h\xi+\beta z\sqrt{q}+\beta z'\sqrt{s-q})}{\e'\cosh (h\xi+\beta z\sqrt{q}+\beta z'\sqrt{s-q})}\Bigr),&\,\,\mbox{if $s\in [q,1]$},\\
			\\
			\e\bigl(\e' \tanh(h\xi+\beta z\sqrt{s}+\beta z'\sqrt{q-s})\bigr)^2,&\,\,\mbox{if $s\in [0,q)$}
		\end{array}\right.
	\end{align}
	and
	\begin{align}\label{dphi}
		\phi_\alpha'(s)&=\left\{
		\begin{array}{ll}
			\beta^2\e \Bigl(\frac{\e' \cosh^{-3}(h\xi+\beta z \sqrt{q}+\beta z'\sqrt{s-q})}{\e'\cosh (h\xi+\beta z \sqrt{q}+\beta z'\sqrt{s-q})}\Bigr),&\mbox{if $s\in[q,1]$},\\
			\\
			\beta^2\e\bigl(\e '\cosh^{-2}(h\xi+\beta z\sqrt{s}+\beta z'\sqrt{q-s})\bigr)^2,&\mbox{if $s\in [0,q)$}.
		\end{array}
		\right.
	\end{align}
\end{lemma}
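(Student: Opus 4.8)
The plan is to first solve the Parisi PDE \eqref{eq1} explicitly for the single-atom choice $\alpha=1_{[q,1]}$, and then substitute the resulting $\Phi_\alpha$ and $\partial_x\Phi_\alpha$ into the representation of Lemma \ref{lem1}. On the interval $[q,1]$ the coefficient $\alpha(s)\equiv 1$, so the Cole--Hopf substitution $\Psi=e^{\Phi_\alpha}$ turns \eqref{eq1} into the linear backward heat equation $\partial_s\Psi=-\frac{\beta^2}{2}\partial_{xx}\Psi$ with terminal data $\Psi(1,x)=\cosh x$; evaluating the Gaussian convolution gives $\Psi(s,x)=\e'\cosh(x+\beta z'\sqrt{1-s})=\cosh(x)e^{\beta^2(1-s)/2}$, hence $\Phi_\alpha(s,x)=\log\cosh x+\frac{\beta^2}{2}(1-s)$ and $\partial_x\Phi_\alpha(s,x)=\tanh x$ for $s\in[q,1]$. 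On $[0,q)$ the coefficient $\alpha(s)\equiv 0$, so $\Phi_\alpha$ itself solves the backward heat equation with data $\Phi_\alpha(q,\cdot)$; propagating the heat semigroup then yields $\partial_x\Phi_\alpha(s,x)=\e'\tanh(x+\beta z'\sqrt{q-s})$ for $s\in[0,q)$.

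With these in hand, the two cases of \eqref{lem2:eq1} follow from Lemma \ref{lem1} once we observe that $\alpha(dt)=\delta_q(dt)$. For $s\in[0,q)$ the integral in the weight $\exp(\int_0^s(\cdots)\alpha(dt))$ vanishes since $q\notin[0,s)$, so $\phi_\alpha(s)=\e\,\partial_x\Phi_\alpha(s,M(s))^2$; writing $M(s)\stackrel{d}{=}h\xi+\beta z\sqrt{s}$ and inserting the formula for $\partial_x\Phi_\alpha$ gives the second line. For $s\in[q,1]$ the weight collapses to $\exp(\Phi_\alpha(s,M(s))-\Phi_\alpha(q,M(q)))=\frac{\cosh M(s)}{\cosh M(q)}e^{-\beta^2(s-q)/2}$, and since $\partial_x\Phi_\alpha(s,M(s))^2=\tanh^2 M(s)$ we obtain $\phi_\alpha(s)=\e[\tanh^2 M(s)\,\frac{\cosh M(s)}{\cosh M(q)}e^{-\beta^2(s-q)/2}]$. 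Conditioning on $M(q)\stackrel{d}{=}h\xi+\beta z\sqrt q$ and using the independent increment $M(s)-M(q)\stackrel{d}{=}\beta z'\sqrt{s-q}$, the identity $\e'\cosh(M(q)+\beta z'\sqrt{s-q})=\cosh(M(q))e^{\beta^2(s-q)/2}$ rewrites the prefactor $e^{-\beta^2(s-q)/2}/\cosh M(q)$ as $1/\e'\cosh(\cdots)$, producing the first line.

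For the derivatives I would exploit a martingale structure rather than differentiating the integrals directly. On $[0,q)$, differentiating the heat equation in $x$ shows $g:=\partial_x\Phi_\alpha$ satisfies $\partial_s g=-\frac{\beta^2}{2}\partial_{xx}g$, so by It\^o $g(s,M(s))$ is a martingale with $dg(s,M(s))=\beta\,\partial_x g(s,M(s))\,dW(s)$; hence $\frac{d}{ds}\e\,g(s,M(s))^2=\beta^2\e\,\partial_{xx}\Phi_\alpha(s,M(s))^2$, and since $\partial_{xx}\Phi_\alpha(s,x)=\e'\cosh^{-2}(x+\beta z'\sqrt{q-s})$ this is exactly the second line of \eqref{dphi}. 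On $[q,1]$ I would note that $w(s):=\cosh M(s)\,e^{-\beta^2 s/2}/(\cosh M(q)e^{-\beta^2 q/2})$ is a martingale with $dw=w\,\beta\tanh M(s)\,dW$, apply It\^o to $\tanh^2 M(s)\,w(s)$, and take expectations; the drift reduces $\phi_\alpha'(s)$ to $\beta^2\e[w(s)(\tfrac12\psi''(M(s))+\psi'(M(s))\tanh M(s))]$ with $\psi=\tanh^2$. The algebraic identity $\tfrac12\psi''+\psi'\tanh=\cosh^{-4}$ then gives $\phi_\alpha'(s)=\beta^2\e[w(s)\cosh^{-4}M(s)]=\beta^2 e^{-\beta^2(s-q)/2}\e[\cosh^{-3}M(s)/\cosh M(q)]$, and conditioning on $M(q)$ as above yields the first line of \eqref{dphi}.

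The main obstacle is the computation on $[q,1]$: there $\partial_x\Phi_\alpha(s,M(s))$ is no longer a martingale under $\p$, so one must carry along the change-of-measure weight $w$, and the simplification hinges on the exact cancellation in $\tfrac12(\tanh^2)''+(\tanh^2)'\tanh=\cosh^{-4}$. The remaining points---justifying the interchange of expectation and differentiation and upgrading the local martingales to true martingales---are routine, since $\tanh$ is bounded and $\cosh M(s)$ has Gaussian exponential moments on the compact interval $[0,1]$.
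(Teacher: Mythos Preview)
Your derivation of \eqref{lem2:eq1} is exactly the paper's: solve the Parisi PDE explicitly on each subinterval, read off $\Phi_\alpha$ and $\partial_x\Phi_\alpha$, and plug into Lemma~\ref{lem1} using $\alpha(dt)=\delta_q(dt)$.

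Your argument for the derivatives \eqref{dphi} is correct but takes a genuinely different route. The paper differentiates the formulas in \eqref{lem2:eq1} directly: it uses the Gaussian heat identity $\frac{d}{ds}\e' f(z'\sqrt{s-q})=\tfrac12\e' f''(z'\sqrt{s-q})$, and on $[q,1]$ applies it to numerator and denominator via the identities $(\tanh^2\!\cosh)''=2\cosh^{-3}+\tanh^2\!\cosh$ and $(\cosh)''=\cosh$, while on $[0,q)$ it writes the square as a product over two independent copies $z_1,z_2$ and uses Gaussian integration by parts. You instead keep the expression as an expectation against the Brownian path $M(s)$ and use It\^o calculus: on $[0,q)$ the martingale property of $\partial_x\Phi_\alpha(s,M(s))$ gives $\phi_\alpha'=\beta^2\e(\partial_{xx}\Phi_\alpha)^2$ immediately, and on $[q,1]$ you carry the exponential weight $w(s)$ as a change-of-measure martingale and reduce the drift to the identity $\tfrac12(\tanh^2)''+(\tanh^2)'\tanh=\cosh^{-4}$. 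The two computations are equivalent at the level of the underlying heat equation, but your packaging avoids the quotient rule and makes the change-of-measure structure explicit, at the cost of invoking stochastic calculus where the paper gets by with ordinary differentiation of Gaussian integrals.
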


\begin{proof}
	A direct computation gives that
	\begin{align*}
		\Phi_{\alpha}(s,x)&=\left\{
		\begin{array}{ll}\frac{\beta^2}{2}(1-s)+\log\cosh x,&\,\,\mbox{if $s\in [q,1]$},\\
			\\
			\frac{\beta^2}{2}(1-q)+\e \log\cosh(x+\beta z\sqrt{q-s}),&\,\,\mbox{if $s\in [0,q)$}
		\end{array}\right.
	\end{align*} 
	and that
	\begin{align*}
		\partial_x\Phi_{\alpha}(s,x)&=
		\left\{
		\begin{array}{ll}\tanh x,&\,\,\mbox{if $s\in [q,1]$},\\
			\\
			\e \tanh(x+\beta z\sqrt{q-s}),&\,\,\mbox{if $s\in [0,q)$}.
		\end{array}\right.
	\end{align*}
	Plugging these along with the assumption $\alpha=1_{[q,1]}$ into Lemma \ref{lem1} establishes \eqref{lem2:eq1}.
	
	As for \eqref{dphi}, note that for any twice differentiable function $f$ with $\|f''\|_\infty<\infty$, we can compute by using the Gaussian integration by parts\footnote{Let $Z=(z_1,\ldots,z_n)$ be a centered Gaussian random vector and let $F$ be a differentiable function on $\mathbb{R}^n$ with $\sum_{i=1}^n\|\partial_{x_i}F\|_\infty<\infty.$ We have that $\e z_1f(Z)=\sum_{i=1}^n \e[z_1z_i]\e\bigl[\partial_{x_i}f(Z)\bigr].$} to obtain
	\begin{align}\label{int}
		\frac{d}{ds}\e' f(z'\sqrt{s-q})&=\frac{1}{2\sqrt{s-q}}\e'f'(z'\sqrt{s-q})=\frac{1}{2}\e'f''(z'\sqrt{s-q}),\,\,\forall s\in (q,1].
	\end{align}
	Applying this equation and 
	\begin{align*}
		\bigl(\tanh^2(x)\cosh(x)\bigr)''&=\frac{2}{\cosh^3(x)}+\tanh^2(x)\cosh(x),\\
		\bigl(\cosh(x)\bigr)''&=\cosh(x)
	\end{align*}
	to the $\e'$-expectations in the first equation of \eqref{lem2:eq1}, the first equation of \eqref{dphi} follows by a straightforward computation.
	To obtain the second equation in \eqref{dphi}, write
	\begin{align*}
		\e\bigl(\e' \tanh(h\xi+\beta z\sqrt{s}+\beta z'\sqrt{q-s})\bigr)^2&=\e\bigl[ \tanh(h\xi+\beta z\sqrt{s}+\beta z_1\sqrt{q-s})\\
		&\qquad\quad\tanh(h\xi+\beta z\sqrt{s}+\beta z_2\sqrt{q-s})\bigr]
	\end{align*}
	for $z_1,z_2$ i.i.d. standard Gaussian independent of $z.$  From the last equation and $(\tanh(x))'={\cosh^{-2}(x)},$ we have
	\begin{align*}
		&\frac{d}{ds}\e\bigl(\e' \tanh(h\xi+\beta z\sqrt{s}+\beta z'\sqrt{q-s})\bigr)^2\\
		&=\frac{\beta}{2}\e \frac{\tanh(h\xi+\beta z\sqrt{s}+\beta z_1\sqrt{q-s})}{\cosh^2(h\xi+\beta z\sqrt{s}+\beta z_2\sqrt{q-s})}\Bigl(\frac{z}{\sqrt{s}}-\frac{z_2}{\sqrt{q-s}}\Bigr)\\
		&+	\frac{\beta}{2}\e \frac{\tanh(h\xi+\beta z\sqrt{s}+\beta z_2\sqrt{q-s})}{\cosh^2(h\xi+\beta z\sqrt{s}+\beta z_1\sqrt{q-s})}\Bigl(\frac{z}{\sqrt{s}}-\frac{z_1}{\sqrt{q-s}}\Bigr).
	\end{align*}
	Using   Gaussian integration by parts yields that
	\begin{align*}
		&\frac{d}{ds}\e\bigl(\e' \tanh(h\xi+\beta z\sqrt{s}+\beta z'\sqrt{q-s})\bigr)^2\\
		&=\beta^2\e \frac{1}{\cosh^2(h\xi+\beta z\sqrt{s}+\beta z_1\sqrt{q-s})\cosh^2(h\xi+\beta z\sqrt{s}+\beta z_2\sqrt{q-s})}\\
		&=\beta^2\e\bigl(\e '\cosh^{-2}(h\xi+\beta z\sqrt{s}+\beta z'\sqrt{q-s})\bigr)^2.
	\end{align*}
\end{proof}

\section{Proof of Theorem \ref{thm1}}\label{sec3}

Throughout the entire proof, we let $q$ be the unique solution to \eqref{fixedpt}. 
First, we assume that $(\beta,h)$ lies above the AT line, i.e.,
\begin{align*}
	\beta^2\e \frac{1}{\cosh^4(h\xi+\beta z\sqrt{q})}> 1.
\end{align*}
We show that $\alpha_P$ can not be replica symmetric. If not, then we must have that $\alpha_P=1_{[q',1]}$ for some $q'\in [0,1]$ and from Proposition \ref{prop2} and Lemma \ref{lem2}, $q'$ must satisfy $$\e\tanh^2(h\xi+\beta z\sqrt{q'})=\phi_{\alpha_P}(q')=q'.$$ 
Since this equation has only one unique solution (see \eqref{fixedpt}), we must have that $q=q'$. Consequently, $\phi_{\alpha_P}(q)=q$ and from \eqref{dphi},  
\begin{align*}
	\phi_{\alpha_P}'(q)&=\beta^2\e \frac{1}{\cosh^4(h\xi+\beta z\sqrt{q})}>1.
\end{align*}
If we let $\varepsilon>0$ be small enough and $\alpha_1=1_{[q+\varepsilon,1]}$, then $\phi_{\alpha_P}(s)>s$ for $s\in [q,q+\varepsilon]$ so that from Proposition \ref{prop1},
$$
\frac{d}{d\theta^+}\mathcal{P}(\alpha_\theta)\Big|_{\theta=0}=-\frac{\beta^2}{2}\int_q^{q+\varepsilon}(\phi_{\alpha_P}(s)-s)ds<0
$$
and consequently, $1_{[q,1]}$ can not be the optimizer. Hence, outside the AT line, the SK model is not replica symmetric.

Next, we assume that $(\beta,h)$ lies inside the AT line, i.e., \eqref{inat} holds. We proceed to show that $\alpha_P=1_{[q,1]}$. To this end, let $\alpha_0=1_{[q,1]}$ and we claim that  \eqref{inat} implies that $\phi_{\alpha_0}(s)\geq s$ if $s<q$ and $\phi_{\alpha_0}(s)\leq s$ if $s>q.$ If this claim is valid, then for any $\alpha_1\in \mbox{Pr}[0,1],$
\begin{align*}
	\frac{d}{d\theta}\mathcal{P}(\alpha_\theta)\Big|_{\theta=0+}&=\frac{\beta^2}{2}\Bigl(\int_0^q\alpha_1(s)(\phi_{\alpha_0}(s)-s)ds+\int_q^1(\alpha_1(s)-1)(\phi_{\alpha_0}(s)-s)ds\Bigr)\geq 0.
\end{align*}
Hence, from Proposition \ref{prop1}, $\alpha_0=1_{[q,1]}$ is the minimizer of the Parisi formula and this will complete our proof. 

We now turn to the proof of our claim. First of all, from \eqref{dphi} and the Jensen inequality with respect to $\e'$, for $s\in [0,q),$ 
\begin{align*}
	\phi_{\alpha_0}'(s)&\leq \beta^2\e\e' \cosh^{-4}(h\xi+\beta z\sqrt{s}+\beta z'\sqrt{q-s})=\beta^2\e \cosh^{-4}(h\xi+\beta z\sqrt{q})\leq 1.
\end{align*}
Since $\phi_{\alpha_0}(q)=q$ by \eqref{lem2:eq1}, we see that $\phi_{\alpha_0}(s)\geq s$ for $s\in [0,q).$ Next, for $s\in (q,1),$ recall from the first equation in \eqref{dphi} that by denoting $Y=h\xi+\beta z\sqrt{q}+\beta z'\sqrt{s-q} $,
\begin{align*}
	\phi_{\alpha_0}'(s)&=\beta^2\e \frac{\e' \cosh^{-3}Y}{\e'\cosh Y}
\end{align*}
Write $\cosh^{-3}x=(\cosh^{-4}x)(\cosh x)$. Since $\cosh^{-4}x$ is decreasing and $\cosh x$ is increasing for $x>0,$ applying the FKG inequality\footnote{The FKG inequality states that if $X$ is a random variable and $f,g$ are both nondecreasing functions with $\mbox{Var}(f(X))<\infty$ and $\mbox{Var}(g(X))<\infty$, then $\e f(X)g(X)\geq \e f(X) \e g(X).$} implies
\begin{align*}
	\e'\cosh^{-3}Y&=\e'\cosh^{-3}|Y|\\
	&\leq \e'\cosh^{-4}|Y|\cdot \e'\cosh|Y|\\
	&=\e'\cosh^{-4}Y\cdot\e'\cosh Y.
\end{align*}
Thus, for any $q<s\leq 1,$
\begin{align*}
	\phi_{\alpha_0}'(s)&\leq  \beta^2\e \e'\cosh^{-4}Y\\
	&=\beta^2\e \cosh^{-4}(h\xi+\beta z \sqrt{s})\\
	&=\beta^2\e \cosh^{-4}(z\sqrt{h^2+\beta^2s})\\
	&\leq \beta^2\e \cosh^{-4}(z\sqrt{h^2+\beta^2q})\\
	&=\beta^2\e \cosh^{-4}(h\xi+\beta z \sqrt{q})\leq 1,
\end{align*}
where the second inequality is valid since $\cosh^{-4}x$ is even and decreasing in $x>0.$ 
Consequently, 
$\phi_{\alpha_0}'(s)-1\leq 0$ for any $q<s\leq 1.$
Since $\phi_{\alpha_0}(q)=q,$ we deduce that $\phi_{\alpha_0}(s)\leq s$ for $q<s\leq 1.$
This finishes the proof of our claim.

\begin{figure}
	\includegraphics[width=5cm]{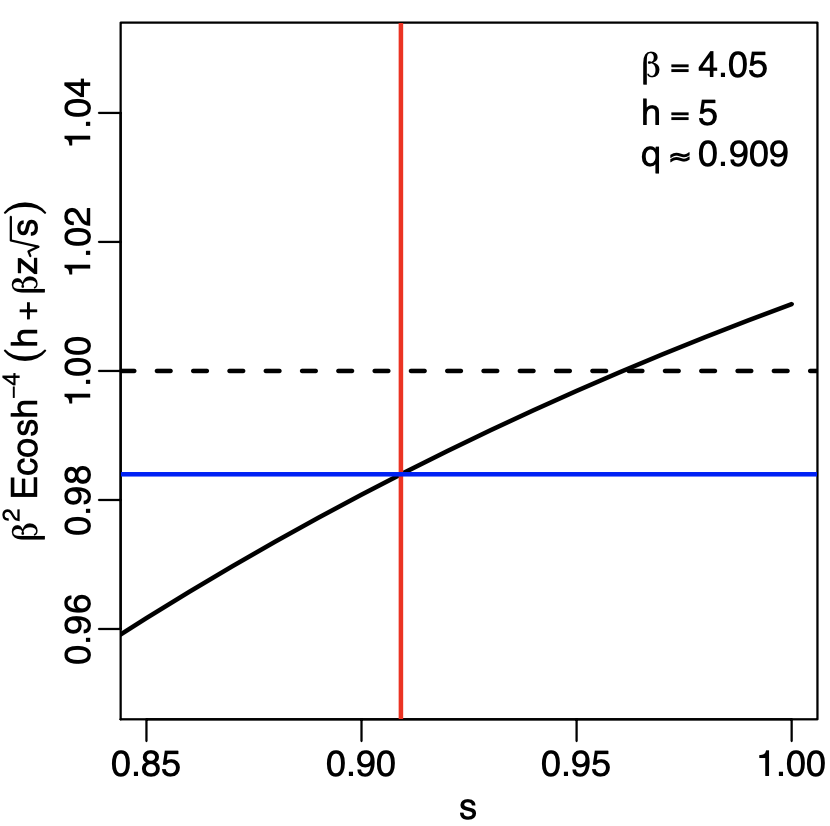}
	\centering
	\caption{This is a simulation of $s\in [q,1]\mapsto \beta^2\e {\cosh^{-4}(h+\beta z\sqrt{s})}$ with $\beta=4.05$, $h=5$, and $q\approx0.909.$ The blue line is the level $ \beta^2\e {\cosh^{-4}(h+\beta z\sqrt{q})})$ and the red line is $q.$ It can be seen that $\beta^2\e {\cosh^{-4}(h+\beta z\sqrt{s})}$ lies above $\beta^2\e {\cosh^{-4}(h+\beta z\sqrt{q})}$ for $s\in [q,1]$ and above $1$ for $s$ large enough.} 
	\label{figure3}
\end{figure}

\begin{remark}
	\rm In the case of the SK model with non-random external field \eqref{SK}, the corresponding $\phi_{\alpha}(s)$ for $\alpha=1_{[q,1]}$ and $q$ satisfying \eqref{fxpt} is the same as in Lemma \ref{lem2} except that $h\xi$ is replaced by $h.$ The same argument in Theorem \ref{thm1} enables us to show that the Parisi formula can not be solved by the replica symmetric solution if $(\beta,h)$ lies outside the AT line. However, if $(\beta,h)$ lies inside the AT line, while it can still be shown that $\phi_\alpha(s)\geq s$ for all $s\in [0,q)$, it is unclear why $\phi_\alpha(s)\leq s$ for all $s\in [q,1].$ In this case, one can still use the FKG inequality to obtain that for $q\leq s\leq 1,$
	\begin{align*}
		\phi_\alpha'(s)&\leq \beta^2\e \cosh^{-4}(h+\beta z\sqrt{s}),
	\end{align*} 
	but a numerical simulation, see Figure \ref{figure3}, suggests that the following does not always hold:
	$$
	\beta^2\e \cosh^{-4}(h+\beta z\sqrt{s})\leq  \beta^2\e \cosh^{-4}(h+\beta z\sqrt{q})
	$$
	for all $q\leq s\leq 1.$
\end{remark}

\bibliographystyle{amsplain}
\footnotesize{\bibliography{ref}}
\end{document}